\pgfplotsset{compat = newest}
\theoremstyle{break}
\newtheorem{mydef}{Definition}
\newtheorem{myremark}{Remark}
\newtheorem{mytheo}{Theorem}
\newtheorem{mylemma}{Lemma}
\newtheorem{myprope}{Property}
\newtheorem{myassum}{Assumption}
\newcommand\tran{\mkern-2mu\raise1.25ex\hbox{$\scriptscriptstyle\top\hspace{0.5mm}$}\mkern-3.5mu}
\newcommand{\R}{\mathbb{R}}
\newcommand{\N}{\mathbb{N}}
\newcommand{\X}{\mathcal{X}}
\newcommand{\bm}[1]{{\boldsymbol{#1}}}
\newcommand{\Verts}[1]{{\Vert #1 \Vert}}
\DeclareMathOperator{\diag}{diag}
\DeclareMathOperator{\var}{var}
\DeclareMathOperator{\mean}{\mu}
\DeclareMathOperator{\Var}{\Sigma}
\DeclareMathOperator{\Mean}{\bm\mu}
\newcommand{\GP}{\mathcal{GP}}
\newcommand{\g}{\bm g}
\newcommand{\ddqd}{\ddot{\bm q}_\bm{d}}
\newcommand{\dqd}{\dot{\bm q}_\bm{d}}
\newcommand{\qd}{\bm{q}_\bm{d}}
\newcommand{\dqe}{\dot{{\bm e}}}
\newcommand{\qe}{{\bm e}}
\newcommand{\ddq}{\ddot{\bm q}}
\newcommand{\dq}{\dot{\bm q}}
\newcommand{\q}{\bm q}
\newcommand{\x}{\bm x}
\newcommand{\y}{\bm y}
\crefname{myprope}{Property}{Properties}
\crefname{myremark}{Remark}{Remarks}
\crefname{myassum}{Assumption}{Assumptions}
\crefname{myproposition}{Proposition}{Propositions}
\crefname{mycorollary}{Corollary}{Corollaries}
\crefname{mylemma}{Lemma}{Lemmas}
\crefname{mytheo}{Theorem}{Theorems}
\crefname{mydef}{Definition}{Definitions}
\crefname{figure}{Fig.}{Fig.}
\Crefname{figure}{Figure}{Figures}
\crefname{equation}{}{}
\title{\LARGE \bf
Stable Gaussian Process based Tracking Control of Lagrangian Systems 
}
\author{Thomas Beckers$^{1}$, Jonas Umlauft$^{1}$, Dana Kuli\'c$^{2}$ and Sandra Hirche$^{1}$% <-this % stops a space
\thanks{$^{1}$ are with the Chair of Information-oriented Control (ITR), Department of Electrical and Computer Engineering,
Technical University of Munich, 80333 Munich, Germany\newline
{\tt\small \{t.beckers, jonas.umlauft, hirche\}@tum.de}\newline
$^{2}$ is with the University of Waterloo, Waterloo, ON N2L 3G1, Canada\newline
{\tt\small dana.kulic@uwaterloo.ca}}% <-this % stops a space
}
\begin{document}

\maketitle
\thispagestyle{empty}
\pagestyle{empty}

%%%%%%%%%%%%%%%%%%%%%%%%%%%%%%%%%%%%%%%%%%%%%%%%%%%%%%%%%%%%%%%%%%%%%%%%%%%%%%%%
\begin{abstract}
High performance tracking control can only be achieved if a good model of the dynamics is available. However, such a model is often difficult to obtain from first order physics only. In this paper, we develop a data-driven control law that ensures closed loop stability of Lagrangian systems.
For this purpose, we use Gaussian Process regression for the feed-forward compensation of the unknown dynamics of the system. The gains of the feedback part are adapted based on the uncertainty of the learned model. Thus, the feedback gains are kept low as long as the learned model describes the true system sufficiently precisely. We show how to select a suitable gain adaption law that incorporates the uncertainty of the model to guarantee a globally bounded tracking error. A simulation with a robot manipulator demonstrates the efficacy of the proposed control law.
\end{abstract}

\section{Introduction}
Lagrangian systems represent a very important and large class of dynamical systems, for which the equations of motion can be derived via the Euler-Lagrange equation. In the past decades, various control schemes for this class have been proposed, most of them can be considered as a subset of computed torque control laws. With computed torque control, it is possible to derive very effective controllers within robust, adaptive and learning control schemes~\cite{siciliano2010robotics}. The controller is separated into a feed-forward and a feedback part. An exact model of the system is necessary for the feed-forward part to compensate the dynamics to achieve a low gain feedback term. This is beneficial in several ways: it favors disturbance attenuation in the presence of noise, avoids the saturation of the actuators, and allows safe interaction~\cite{nguyen2008learning}. Since the accuracy of the compensation depends on the precision of the model, all generalized external forces must be incorporated as precisely as possible~\cite{spong2008robot}. However, an accurate model of these uncertainties is hard to obtain by classical first principles based techniques. Especially in modern applications of Lagrangian systems such as service robotics, the interaction with unstructured and a priori unknown environments further increases the uncertainty.

A suitable approach to avoid the time-consuming or even infeasible physical modeling is provided by Gaussian Process regression (GPR) which is a promising, data-driven learning approach~\cite{deisenroth2015gaussian}. GPR is a supervised learning technique which combines several advantages: It requires only a minimum of prior knowledge for arbitrary complex functions, generalizes well even for small training data sets and has a precise trade-off between fitting the data and smoothing~\cite{rasmussen2006gaussian}. In addition, GPR provides not only a mean function but also a predicted variance, and therefore a measure of the model fidelity based on the distance to the training data. This is a significant benefit also for the feedback part of the control law since the model fidelity can be used to adapt the feedback gains to keep the gains as low as possible. For this purpose, the gains are kept low in state space regions with high model accuracy and increased otherwise.\\

In~\cite{slotine1987adaptive} an online adaptation law for the control of robotic manipulators is presented so that the tracking error converges to zero. However, this approach is based on an underlying parametric model and is limited to dynamics which are linear in terms of a suitably selected set of parameters. In~\cite{chowdhary2015bayesian} a stable feedback linearization with online learned GPR is proposed, but without adapting the feedback gains. The authors of \cite{nguyen2010using} present a hybrid learning approach which incorporates model knowledge. The classical adaptive control approach with varying feedback gains based on the tracking error and the consequences for stability are analyzed in~\cite{ravichandran2004stability}, but without considering the model fidelity. To the best of the authors' knowledge, no result is yet available for Gaussian Process based tracking control with adaptive feedback gains and stability guarantees.\\

The contribution of this paper is a Gaussian Process based control law for Lagrangian systems, that adapts feedback gains based on the model fidelity. For this purpose, the data-driven GPR learns the difference between an estimated model and the true system dynamics from training data. Afterwards, the control law uses the mean of the GPR to compensate the unknown dynamics and the model fidelity to adapt the feedback gains. The derived method guarantees that the tracking error is uniformly ultimately bounded and exponentially converges to a ball for a given probability.\\

The remainder of the paper starts with Section~\ref{sec:def} where we introduce Lagrangian systems and GPR. Section~\ref{sec:modeling} and~\ref{sec:ctrl_law} describe the computation of the model error and the bounded tracking error. The method is validated in Section~\ref{sec:sim}.\\

\textbf{Notation:} Vectors and vector-valued functions are denoted with bold characters. Matrices are described with capital letters. The term~$A_{:,i}$ denotes the i-th column of the matrix~$A$. The expression~$\mathcal{N}(\mu,\Sigma)$ describes a normal distribution with mean~$\mu$ and covariance~$\Sigma$. The Euclidean norm is given by~$\Vert\cdot\Vert$ and the norm of a matrix~$A$~by $\Verts{A}=\bar\lambda(A\tran A)^{1/2}$.
\section{Definitions}\label{sec:def}
This section starts with the necessary background on Gaussian Process regression. Afterwards, the class of Lagrangian system and stability concepts used are introduced.
%%%%%%%%%%%%%%%%%%%%%%%%%%%%%%%%%%%%%%%%%%%%%%%%%%
%%%%%%%%%%%%%%%%%%%%%%%%%%%%%%%%%%%%%%%%%%%%%%%%%%
\subsection{Gaussian Process regression}
Let~$(\Omega, \mathcal{F},P)$ be a probability space with the sample space~$\Omega$, the corresponding~$\sigma$-algebra~$\mathcal{F}$ and the probability measure~$P$. The set~$\X \subseteq \R^n$ with~$n\in\N_{>0}$ denotes the index set. A stochastic process is a discrete or real valued function~$f(\bm x, \omega)$ which is a measurable function of~$\omega\in\Omega$ with~$\bm x\in\X$. A Gaussian Process is a stochastic process with
\begin{align}
f(&\bm x) \sim \GP(m(\bm x),k(\bm x,\bm x^\prime)),\qquad \bm x,\bm x^\prime\in\X\\
m(&\bm x)\colon\X\to\R,\, k(\bm x,\bm x^\prime)\colon\X\times \X\to\R,
\end{align}
which is fully described by a mean function~$m(\bm x)$ and a covariance function~$k(\bm x,\bm x^\prime)$, since with fixed~$\bm x$ it is Gaussian distributed. The mean function is usually defined to be zero if no prior knowledge regarding $f$ is available~\cite{rasmussen2006gaussian}. The covariance function is a measure of the correlation of data two points~$(\bm x,\bm x^\prime)$. The covariance function depends on hyperparameters, which are function dependent. In this paper, we use the Gaussian Process 
\begin{align}
\bm f(\x)&\sim \GP(\bm m(\x),\bm k(\x,\bm x^\prime)),\,\bm f\colon\R^n\to\R^n
\end{align}
with $\X=\R^n$ for regression of vector-valued, nonlinear functions. Since the output of a Gaussian Process is one dimensional, a~$n$-dimensional function~$\bm f$ requires~$n$ GPs. Therefore, the vector valued function~$\bm m(\cdot)=[m_1(\cdot),\ldots,m_n(\cdot)]\tran$ describes the mean functions for each component of~$\bm f(\x)$. The Gaussian Process for each state depends on the corresponding mean and covariance function and is given by 
\begin{align}
\bm f(\x)&=
\begin{cases} 
f_1(\x)\sim \GP(m_1(\x),k_{\varphi_1}(\x,\bm x^\prime))\\
\vdots\hspace{0.9cm}\vdots\hspace{0.5cm}\vdots\\
f_n(\x)\sim \GP(m_n(\x),k_{\varphi_n}(\x,\bm x^\prime)),
\end{cases}
\end{align}
with the set of hyperparameters~$\varphi_i$. The GPR is trained with an input and a corresponding output set which is generated by
\begin{align}
\y&=\bm f(\x)+\bm\eta,\,\bm y\in\R^n\\
\bm\eta&\sim\mathcal{N}(\bm 0,\diag (\sigma_{n,1}^2,\ldots,\sigma_{n,n}^2))
\end{align}
with Gaussian noise~$\bm\eta\in\R^n$. The elements~$\sigma^2_{n,i}$ are the variances of the noise of the output data for all~$i\in\{1,\ldots,n\}$. The~$m$ training inputs~$\{\bm x^{\{j\}}\}_{j=1}^m$ and corresponding outputs~$\{\bm y^{\{j\}}\}_{j=1}^m$ are concatenated in an input training matrix~$X=[\bm x^{\{1\}},\bm x^{\{2\}},\ldots,\bm x^{\{m\}}]\in\R^{n\times m}$ and an output training matrix~$Y=[\bm y^{\{1\}},\bm y^{\{2\}},\ldots,\bm y^{\{m\}}]\tran\in\R^{m\times n}$. Accordingly, the training data set is described by~$\mathcal D=\{X,Y\}$. The prediction at a test input $\x^*\in\R^n$ for each component~$i$ of a new output vector~$y^*_i\in\R$ is calculated as a Gaussian distributed variable with the conditional mean~$\mean(y^*_i\vert \x^*,\mathcal D)$ and the conditional variance~$\var(y^*_i\vert \x^*,\mathcal D)$. The joint distribution of the~$i$-th component of the vector~$y^*_i$ and the corresponding vector of the training outputs~$Y$ is 
\begin{align}
\begin{bmatrix} Y_{:,i} \\ y^*_i \end{bmatrix}\sim \mathcal{N} \left(\bm 0, \begin{bmatrix} K_{\varphi_i}(X,X) & \bm k_{\varphi_i}(\x,X)\\ \bm k_{\varphi_i}(\x,X)\tran & k_{\varphi_i}(\x,\x) \end{bmatrix}\right)\label{for:joint}
\end{align} 
where~$Y_{:,i}$ is the~$i$-th column of the matrix~$Y$. The function~$K_{\varphi_i}(X,X)$ is called the covariance matrix and~$\bm k_{\varphi_i}(\x,X)$ the vector-valued extended covariance function with the set of hyperparameters~$\varphi_i$
\begin{align}
\begin{split}
&K_{\varphi_i}(X,X)\colon\R^m\times \R^m\to\R^{m\times m}\\
&K_{j',j}= k_{\varphi_i}(X_{:, j'},X_{:, j})\\
&\bm k_{\varphi_i}(\x,X)\colon\R\times \R^m\to\R^m,\,k_{\varphi_i,j} = k_{\varphi_i}(\bm{x},X_{:, j})\\
&\forall j',j\in\lbrace 1,\ldots,m\rbrace,i\in\lbrace 1,\ldots,n\rbrace .
\end{split}
\end{align}
Assuming the mean functions of the GPs are set to zero, a prediction of the~$i$-th component of~$\y^*\in\R^n$ is derived from the joint distribution~\eqref{for:joint}. A nonzero mean function can be easily included, see~\cite{rasmussen2006gaussian} for details. The conditional probability distribution is Gaussian with the conditional mean
\begin{align}
\mean_i(\y^*\vert \x^*,\mathcal D)&=\bm k_{\varphi_i}(\x^*,X)\tran (K_{\varphi_i}+I \sigma^2_{n,i})^{-1}Y_{:,i}
\end{align}
where $I$ is the identity matrix while the variance of the prediction is given by
\begin{align}
\var_i(\y^*\vert \x^*,\mathcal D)&=k_{\varphi_i}(\x^*,\x^*)-\bm k_{\varphi_i}(\x^*,X)\tran \notag\\
& \phantom{{}=}(K_{\varphi_i}+I \sigma^2_{n,i})^{-1} \bm k_{\varphi_i}(\x^*,X).
\end{align}
The set of hyperparameters~$\varphi_i$ are optimized by means of the likelihood function, thus by 
\begin{align}
\varphi_i^* = \arg\max_{\varphi_i} \log P(Y_{:,i}|X,\varphi_i),\,\forall i=1,\ldots,n.
\end{align}
The~$n$ normally distributed components of~$y^*_i\vert \x^*,\mathcal D$ are combined in a multi-variable Gaussian distribution 
\begin{align}
\y^*\vert \x^*,\mathcal D &\sim \mathcal{N} (\bm\mean(\cdot),\Var(\cdot))\\
\bm \mean(\y^*\vert \x^*,\mathcal D)&=[\mean_1(\cdot),\ldots,\mean_n(\cdot)]\tran\\
\Var(\y^*\vert \x^*,\mathcal D)&=\diag(\var_1(\cdot),\ldots,\var_n(\cdot)).
\end{align} 
The predicted variance with respect to just a part of~$\x^*\in\R^n$ can be done by marginalization. Assume~$\x^*=[\x^{*}_1\tran,\x^{*}_2\tran]\tran$ with~$\x^*_1\in\R^{n_1},\x^*_2\in\R^{n_2}$ and~$\x^*\in\R^{n=n_1+n_2}$. Then, the marginal variance of the prediction based on~$\x^*_1$ is given by
\begin{align}
\var_i(\y^*\vert \x^*_1,\mathcal D)&=k_{\phi_i}(\x^*_1,\x^*_1)-\bm k_{\phi_i}(\x^*_1,X_{1:n_1,:})\tran \notag\\
& \phantom{{}=}(K_{\varphi_i}+I \sigma^2_{n,i})^{-1} \bm k_{\phi_i}(\x^*_1,X_{1:n_1,:})\label{for:margvar}
\end{align}
with the covariance function defined on the input space~$\R^{n_1}$. Since the dimension of the input data is reduced, i.e. from~$n_1+n_2$ to $n_1$, the necessary set of hyperparameters~$\phi_{i}$ is a subset of the original set of hyperparameters~$\varphi_{i}$. The combined marginal variance is rewritten as
\begin{align}
\Var(\y^*\vert \x^*_1,\mathcal D)&=\diag(\var_1(\cdot),\ldots,\var_{n_1}(\cdot)).
\end{align}
%%%%%%%%%%%%%%%%%%%%%%%%%%%%%%%%%%%%%
\subsection{Class of Lagrangian systems}
The assumed class of Lagrangian systems is described by 
\begin{align}
\label{for:dyn_model_man}
H(\q)\ddq+C(\q,\dq)\dq+\bm \g(\q)+\bm\kappa(\breve{\q})=\bm\tau,
\end{align}
where~$\q\in\R^n$ are the generalized coordinates with their time derivatives~$\ddq,\dq\in\R^n$ and the generalized input~$\bm\tau\in\R^n$. An additional unknown dynamic~$\bm\kappa(\breve{\q}):\R^{3n}\to\R^{n}$ which\linebreak depends on $\breve{\q}=[\ddq^\top,\dq^\top,\q^\top]^\top$ affects the system as a generalized force. The generalized inertia~$H(\q)\colon\R^n\to\R^{n\times n}$, the matrix $C(\q,\dq)\colon\R^n\times \R^n\to\R^{n\times n}$, and~$\g(\q)\colon\R^n\to\R^n$ are assumed to have the following properties:
\begin{myprope}[Structural properties]\label{prop:structural}
The matrix~$H(\q)$ is symmetric and positive definite and there is a parameterization of matrix~$C(\q,\dq)$ so that $\forall \dq,\q\in\R^n$
\begin{itemize}
\item~$\dot{H}(\q)=C(\q,\dq)+C(\q,\dq)\tran\in\R^{n\times n}$ and thus,
\item~$\dot{H}(\q)-2C(\q,\dq)\in\R^{n\times n}$ is skew-symmetric.
\end{itemize}
\end{myprope}
\begin{myprope}[Boundedness and Linearity]\label{prop:bound}
\begin{itemize}
\item The matrix~$H(\q)$ is bounded, i.e. there exists two constants~$h_1\in\R_{>0}$ and $h_2\in\R_{>0}$ such that $H(\q)$ is bounded with~$h_1\Verts{\x}^2\leq\x\tran H(\q)\x\leq h_2\Verts{\x}^2$ for all~$\x,\q \in \R^n$.
\item The matrix~$C(\q,\dq)$ is bounded in~$\q$ and linear in~$\dq$,i .e. there exists a$~k_C\in\R_{>0}$ such that~$\Vert C(\q,\dq) \Vert\leq k_C\Vert \dq \Vert$, and~$C(\q,\dq)\bm p=C(\q,\bm p)\dq$ for all~$\q,\dq,\bm p \in\R^n$.
\end{itemize}
\end{myprope}
As shown in \cite{ghorbel1993positive}, these properties hold, for example, for the class of robotic manipulators with revolute joints.
%\subsection{Stability Definitions}
%\begin{mydef}[Uniform Ultimate Boundedness \cite{khalil1996noninear}]
%A solution~$\x(t)\colon [t_0,\infty)\to\R^n$ of a dynamical system with initial condition~$\x(t_o)=\x_0$ is said to be uniformly ultimately bounded if there exists a positive constant~$T(\delta)$ such that~$\Vert \x_0\Vert <\delta$ implies that~$\Vert \x(t)\Vert\leq\gamma$ is bounded for all~$t\geq t_0+T(\delta)$. Hereby, the constant~$\gamma\in\R_{>0}$ is the ultimate bound. 
%\end{mydef}
%\begin{mylemma}[\cite{corless1990guaranteed}]
%\label{lemma:bound}
%Suppose there exists a continuously differentiable scalar function~$V(\x,t)\colon \R^n\times\R_{\geq 0}$ with the following properties
%\begin{itemize}
%\item There exists positive constants~$c_1$ and~$c_2$ such that for all~$\x\in\R^n$ and~$t\in\R_{\geq 0}$
%\begin{align}
%c_1 \Vert \x \Vert^2\leq V(\x,t)\leq c_2 \Vert \x \Vert^2.
%\end{align}
%\item There exists constants~$\xi>0$ and~$\varrho\geq 0$ such that along any trajectory 
%\begin{align}
%\dot V(\x,t)\leq -\xi V(\x,t)+\varrho
%\end{align}
%for all~$\x$ with~$\xi V(\x,t)>\varrho$ and~$t\in\R_{\geq 0}$.
%\end{itemize}
%Then the trajectories are uniformly ultimately bounded and exponentially converge to the closed ball~$B_r$ with the radius~$r=\sqrt{\varrho/(\xi c_1)}$ around the origin.
%\end{mylemma}
\section{Modeling}
\label{sec:modeling}
In this section, we introduce the modeling procedure and the error estimation between the learned and the true dynamics. Consider the Lagrangian system in~\cref{for:dyn_model_man} for which the following is assumed:
\begin{myassum}\label{as:rkhs}
The generalized external force~$\bm\kappa(\breve{\q})$ has a bounded reproducing kernel Hilbert space (RKHS) norm, i.e.~$\Verts{\bm\kappa}_k<\infty$, with respect to the covariance function~$k(x,x^\prime)$ of a GPR.
\end{myassum}
This assumption ensures that the covariance function is suitable for the approximation of the function~$\bm\kappa(\breve{\q})$. A comparison of different covariance functions and recommendation for Lagrangian systems can be found in~\cite{bishop2006pattern,cheng2016learning}.\\
To achieve a controller with a good feed-forward compensation, the system~\cref{for:dyn_model_man} must be identified. Since partial a priori knowledge of the system is often available, we use a so-called hybrid learning approach which is a combination of a parametric and a data-driven model. For this purpose, a Gaussian Process learns the difference between the real and the estimated dynamics 
\begin{align}
\hat{\bm \tau}&=\hat H(\q)\ddq+\hat C(\q,\dq)\dq+\hat \g(\q)\label{for:learn1}
\end{align}
of the system where $\hat H, \hat C$ and $\hat\g$ are the estimates of the true matrices, thus
\begin{align}
\tilde{\bm \tau} =\bm \tau-\hat{\bm \tau} = &H(\q)\ddq+C(\q,\dq)\dq+\g(\q)+\bm\kappa(\breve{\q})\\
-&\hat H(\q)\ddq-\hat C(\q,\dq)\dq-\hat \g(\q).\label{for:learn}
\end{align} 
Without loss of generality, we assume in the following analysis that~$H=\hat H,C= \hat C,$ and~$g=\hat g$ since the error of the estimation can be included in~$\bm\kappa(\breve{\q})$.
For the generation of training data, a set of~$\{\ddq,\dq,\q\}$ as training inputs~$X$ and~$\tilde{\bm \tau}$ as training output~$Y$ is necessary, which could be generated through any suitably well-behaving control. For the stability analysis of the closed loop, the error between the model and the true dynamics is necessary. A probabilistic upper bound for the distance between the mean prediction~$\Mean(\tilde{\bm \tau})$ of the Gaussian Process regression and the true function is given in \cite{srinivas2012information} and is extended for multidimensional functions in the following lemma.
\begin{mylemma}
\label{lemma:boundederror}
Consider a Lagrangian System~\cref{for:dyn_model_man} and the estimated dynamics~\cref{for:learn1}. A Gaussian Process is trained with the difference between the true and the estimated dynamics. Then the model error is bounded by
\begin{align}
P\{\Verts{\Mean(\tilde{\bm \tau}\vert \breve{\q},\mathcal D)-\bm\kappa(\breve{\q})}\leq \Verts{\bm \beta\tran \Var^{\frac{1}{2}}(\tilde{\bm \tau}\vert \breve{\q},\mathcal D)}\}\geq (1-\delta)^n
\end{align}
for~$\breve{\q}\in D$ on a compact set $D\subset\R^{3n}$ where each element of~$\bm\beta\in\R^{n}$ is defined as
\begin{align}
\beta_j&=\sqrt{2\Verts{\kappa_j}^2_k+300 \gamma_j \ln^3\left(\frac{m+1}{\delta}\right)}.
\end{align}
The variable~$\gamma_j\in\R$ is the maximum information gain 
\begin{align}
\gamma_j&=\max_{\breve{\q}^{\{1\}},\ldots,\breve{\q}^{\{m+1\}}\in D}\frac{1}{2}\log \vert I+\sigma^{-2}K_{\varphi_j}(\x,\x^\prime)\vert\\
&\x,\x^\prime\in\left\lbrace\breve{\q}^{\{1\}},\ldots,\breve{\q}^{\{m+1\}}\right\rbrace
\end{align}
with the covariance matrix~$K_{\varphi_j}$.
\end{mylemma}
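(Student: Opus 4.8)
The plan is to reduce the multidimensional statement to the scalar confidence bound of \cite{srinivas2012information} applied separately to each of the $n$ Gaussian Processes, and then to recombine the $n$ componentwise bounds into a Euclidean-norm estimate. First I would invoke \cref{as:rkhs}, which guarantees $\Verts{\kappa_j}_k<\infty$ for every component $j$, so that the scalar result of \cite{srinivas2012information} applies to the GP learning the $j$-th coordinate of the model error $\tilde{\bm\tau}$. That result states that, on the compact set $D$, the event $A_j$ that
\begin{align}
\vert \mean_j(\tilde{\bm\tau}\vert\breve{\q},\mathcal D)-\kappa_j(\breve{\q})\vert \leq \beta_j\,\var_j^{\frac{1}{2}}(\tilde{\bm\tau}\vert\breve{\q},\mathcal D)\quad\text{holds for all }\breve{\q}\in D
\end{align}
has probability $P(A_j)\geq 1-\delta$, with exactly the constant $\beta_j$ and maximum information gain $\gamma_j$ stated in the lemma (the index $m+1$ reflecting the evaluation at a test point after $m$ training samples, and $\beta_j$ being the square root of the round-dependent constant of \cite{srinivas2012information}).

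Next I would exploit the product structure of the regression model. Because the output noise covariance $\diag(\sigma_{n,1}^2,\ldots,\sigma_{n,n}^2)$ is diagonal and each GP is trained only on its own output channel $Y_{:,j}$, the $n$ posteriors, and hence the confidence events $A_1,\ldots,A_n$, are mutually independent. Therefore
\begin{align}
P\!\left(\bigcap_{j=1}^n A_j\right)=\prod_{j=1}^n P(A_j)\geq (1-\delta)^n,
\end{align}
which is precisely where the exponent $n$ originates: using independence rather than a union bound is what yields $(1-\delta)^n$ in place of the weaker $1-n\delta$.

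Finally I would transfer the componentwise bounds to the Euclidean norm. On the joint event $\bigcap_{j}A_j$, squaring and summing the scalar inequalities gives
\begin{align}
\Verts{\Mean(\tilde{\bm\tau}\vert\breve{\q},\mathcal D)-\bm\kappa(\breve{\q})}^2=\sum_{j=1}^n\bigl(\mean_j-\kappa_j\bigr)^2\leq\sum_{j=1}^n\beta_j^2\,\var_j .
\end{align}
Since $\Var^{\frac{1}{2}}=\diag(\var_1^{\frac{1}{2}},\ldots,\var_n^{\frac{1}{2}})$, the right-hand side equals $\Verts{\bm\beta\tran\Var^{\frac{1}{2}}}^2$, so the norm inequality of the lemma holds on $\bigcap_{j}A_j$. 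Because that event is contained in the event appearing in the statement, its probability is a lower bound for it, and the claim follows.

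I expect the main obstacle to be the careful justification of the independence of the $A_j$: one must argue that the diagonal noise structure genuinely decouples the per-channel posteriors so that the joint probability factorizes, rather than only admitting a looser union bound. Transcribing the scalar estimate, in particular matching the constant $\beta_j$, the cube of the logarithm, and the role of $\gamma_j$ evaluated over $m+1$ points, is otherwise a direct quotation from \cite{srinivas2012information}, and the norm recombination is elementary.
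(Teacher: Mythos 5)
Your proposal is correct and follows essentially the same route as the paper's own proof: componentwise application of the scalar bound from \cite[Theorem 6]{srinivas2012information}, factorization of the joint probability into $(1-\delta)^n$ via the independence induced by the uncorrelated (diagonal Gaussian) noise, and recombination of the scalar inequalities into the Euclidean-norm bound. In fact you spell out two steps the paper leaves implicit — why the events $A_j$ genuinely factorize, and the squaring-and-summing argument showing the intersection event implies $\Verts{\Mean(\tilde{\bm \tau}\vert \breve{\q},\mathcal D)-\bm\kappa(\breve{\q})}\leq \Verts{\bm \beta\tran \Var^{\frac{1}{2}}(\tilde{\bm \tau}\vert \breve{\q},\mathcal D)}$ — so no gap remains.
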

%\begin{mylemma}
%\label{lemma:modelerror}
%Let~$D\subset\R^k,k\in\N$ be a compact set and~$\delta\in(0,1)$. Assume a function~$\bm f\colon D\to \R^n,\,n\in\N$ where each function~$f_1,\ldots,f_n\in \mathcal{H}_k(D)$. A Gaussian Process is trained with~$m$ data points~$\mathcal D=\{\tilde{\x}^{\{i\}},\tilde{\y}^{\{i\}}\}_{i=1}^m$ of 
%\begin{align}
%\y&=\bm f(\x)+\bm\eta,\quad &\y,\bm\eta&\in\R^n,\x\in D\\
%\bm\eta&\sim\mathcal{N}(\bm 0,\diag(\sigma_1^2,\ldots,\sigma_n^2)),&\sigma_j&\in\R_{>0}
%\end{align}
%with~$j=1,\ldots,n$. Then, the model error $\Delta\in\R$
%\begin{align}
%\Delta=\Verts{\Mean(\y^*\vert \x^*,\mathcal D)-\bm f(\x^*)}
%\end{align}
%is bounded with a probability of at least~$(1-\delta)^n$ by
%\begin{align}
%\text{P}&\left\lbrace \forall \x^*\in D,\, \Delta\leq \Verts{\bm \beta\tran \Var(\y^*\vert \x^*,\mathcal D)}\right\rbrace\geq (1-\delta)^n
%\end{align}
%where each element of~$\bm\beta\in\R^{n}$ is defined as.
%\begin{align}
%\beta_j&=\sqrt{2\Verts{f_j}^2_k+300 \gamma_j \ln^3\left(\frac{m+1}{\delta}\right)}.
%\end{align}
%The variable~$\gamma_j\in\R$ is the maximum information gain 
%\begin{align}
%\gamma_j&=\hspace{-0.4cm}\max_{\bm{x}^{\{1\}},\ldots,\bm{x}^{\{m+1\}}\in D} I(y_j^{\{1\}},\ldots,y_j^{\{m+1\}} ; f_j)\label{for:gamma}\\
%&= \hspace{-0.4cm}\max_{\bm{x}^{\{1\}},\ldots,\bm{x}^{\{m+1\}}\in D}\frac{1}{2}\log \vert I+\sigma^{-2}K_{\varphi_j}(\x,\x^\prime)\vert
%\end{align}
%with the covariance matrix~$K_{\varphi_j}$ and the used elements~$\x,\x^\prime\in\{\bm{x}^{\{1\}},\ldots,\bm{x}^{\{m+1\}}\}$.
%\end{mylemma}
\begin{proof}
The result is a direct consequence of \cite[Theorem 6]{srinivas2012information} and the fact that~$\bm\eta$ is uncorrelated. Therefore, 
\begin{align}
\text{P}&\left\lbrace \forall \breve{\q}\in D,\right.\notag\\
&\vert\mean(\tilde{\tau}_1\vert \breve{\q},\mathcal D)-f_1(\breve{\q})\vert\leq \vert \beta_1 \var_1^{\frac{1}{2}}(\tilde{\tau}_1\vert \breve{\q},\mathcal D)\cap\ldots\cap\notag\\
&\vert\mean(\tilde{\tau}_n\vert \breve{\q},\mathcal D)-f_n(\breve{\q})\vert\leq \vert \beta_n \var_n^{\frac{1}{2}}(\tilde{\tau}_n\vert \breve{\q},\mathcal D)\vert\left.\right\rbrace\geq (1-\delta)^n\notag\\
\Rightarrow&\text{P}\left\lbrace \forall \breve{\q}\in D,\, \Delta\leq \Verts{\bm \beta\tran \Var^{\frac{1}{2}}(\tilde{\bm \tau}\vert \breve{\q},\mathcal D)}\right\rbrace\geq (1-\delta)^n\label{for:prob}
\end{align}
provides an upper bound for the norm of the model error with a probability of at least~$(1-\delta)^n$.
\end{proof}
The information capacity~$\gamma$ has a sublinear dependence on the number of training points for many commonly used covariance functions and can be approximated by a constant, e.g. shown in \cite{srinivas2012information}. Therefore, even though~$\bm \beta$ is increasing with the number of training samples, it is possible to learn the true function~$\bm f(\x)$ arbitrarily exactly.\\

\section{PD control with variable Gains}\label{sec:ctrl_law}
Classical computed torque control employs static feedback gains. Low feedback gains are desirable to avoid saturation of the actuators and achieve a good noise attenuation. However, the unknown dynamics usually requires a specific minimal feedback gain to keep the tracking error under a defined limit. After a training procedure, we use the mean of the Gaussian Process regression to compensate~$\bm\kappa(\breve{\q})$ with the feed-forward part and adapt the gains based on the model fidelity. For this purpose, the uncertainty of the regression is used to scale the feedback gains.\\
We start with the following natural assumption for the desired trajectory.
\begin{myassum}
\label{as:tra}
The desired trajectory is bounded by~$\Vert\qd\Vert\leq \bar{q}_d$,~$\Vert\dqd\Vert\leq \bar{\dot q}_d$ with~$\bar{q}_d,\bar{\dot q}_d\in\R_{\geq 0}$.
\end{myassum}
In the next step, an assumption for the varying gains is introduced.
\begin{myassum}
\label{as:KdKp}
Let~$G_d(\dq,\q)\colon\R^{n}\times\R^n \rightarrow\R^{n\times n}$ and~$G_p(\q)\colon\R^{n} \rightarrow\R^{n\times n}$.\\
i) Let~$K_d\colon\R^{n\times n} \rightarrow\R^{n\times n}$ be a positive definite and symmetric matrix such that $(K_d\circ G_d)$ is continuous and that there exits a lower and upper quadratic bound
\begin{align}
k_{d1}\Vert \x \Vert^2\leq \x\tran K_d(G_d(\dq,\q))\bm x\leq k_{d2}\Vert \x \Vert^2,\,\forall \dq,\q,\x\in\R^n,
\end{align}
with~$k_{d1}\in\R_{\geq 0}$ and~$k_{d2}\in\R_{>0}$.\\
ii) Let~$K_p\colon\R^{n\times n} \rightarrow\R^{n\times n}$ be a positive definite diagonal matrix. Each diagonal element~$K_{p,ii}(G_p(\q))$ is continuous and bounded by~$0<\underline{k}_{p,ii}\leq K_{p,ii}(G_p(\q))) \leq \bar{k}_{p,ii}$ for all~$\q\in\R,i=1,\ldots,n$, so that
\begin{align}
k_{p1}\Vert \x \Vert^2\leq \x\tran K_p(G_p(\q))\bm x\leq k_{p2}\Vert \x \Vert^2,\,\forall \q,\x\in\R^n
\end{align}
with~$k_{p1}=\min_{i}\underline{k}_{p,ii}\in\R_{\geq 0}$ and~$k_{p2}=\max_{i}\bar{k}_{p,ii}\in\R_{>0}$ for all~$i=1,\ldots,n$.
\end{myassum}
~\Cref{as:KdKp} restricts the matrix $K_p$ to be diagonal which results in a decentralized feedback of the tracking error. The symmetric form of $K_p$ and $K_d$ is a common assumption which does not restrict the applicability of the approach but must be kept in mind during the design of the controller. Before the control law is proposed, the following definition and lemma are introduced.
\begin{mydef}
Assume a Gaussian Process trained with the difference between the true and the estimated dynamics of a Lagrangian system~\cref{for:learn}. The marginal variances~$\Var_d:\R^n\times\R^n\to\R^{n\times n}$ and~$\Var_p:\R^n\to\R^{n\times n}$ are defined with~\cref{for:margvar} by
\begin{align}
\Var_d=\Var(\tilde{\bm \tau}\vert \dq,\q,\mathcal D),\,\Var_p&=\Var(\tilde{\bm \tau}\vert\q,\mathcal D).
\end{align}
\end{mydef}
\begin{mylemma}
\label{lemma:A}
There exists an~$\epsilon>0$ such that the matrix~$A\in\R^{2n\times 2n}$ given by\footnote{For notational convenience, the dependencies of $H,C,\bm g$ are dropped here.}
\begin{align}
A=\begin{bmatrix}
-K_d(\Var_d)+\varepsilon H & \frac{\varepsilon}{2}(- K_d\tran(\Var_d)+ C)\\
\frac{\varepsilon}{2}(- K_d(\Var_d)+ C\tran) & - \varepsilon K_p(\Var_p)
\end{bmatrix},
\end{align}
which is is negative definite under~\cref{prop:structural} and~\cref{as:KdKp} for all~$\dq,\q\in\R^n$.
\end{mylemma}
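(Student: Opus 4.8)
The plan is to use the symmetry of $A$ and collapse the $2n\times 2n$ definiteness question to a scalar two-dimensional one. First I would note that $K_d(\Var_d)$ is symmetric by \cref{as:KdKp}, so the upper-right block $\tfrac{\varepsilon}{2}(-K_d\tran+C)$ and the lower-left block $\tfrac{\varepsilon}{2}(-K_d+C\tran)$ are mutual transposes and $A=A\tran$. Negative definiteness is then equivalent to $\bm z\tran A\bm z<0$ for every $\bm z=[\x_1\tran,\x_2\tran]\tran\neq\bm 0$, and multiplying out the blocks gives
\begin{align}
\bm z\tran A\bm z=\x_1\tran(-K_d+\varepsilon H)\x_1+\varepsilon\,\x_1\tran(-K_d+C)\x_2-\varepsilon\,\x_2\tran K_p\x_2. \notag
\end{align}
The whole argument then reduces to dominating the indefinite cross term by the two negative diagonal terms with a single fixed $\varepsilon$.

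Next I would insert the quadratic bounds that are already available. \cref{as:KdKp} and \cref{prop:bound} give $\x_1\tran(-K_d+\varepsilon H)\x_1\le-(k_{d1}-\varepsilon h_2)\Verts{\x_1}^2$ and $-\varepsilon\,\x_2\tran K_p\x_2\le-\varepsilon k_{p1}\Verts{\x_2}^2$, while $\Verts{-K_d+C}\le k_{d2}+k_C\Verts{\dq}$ (using $\Verts{K_d}\le k_{d2}$ together with the Coriolis bound of \cref{prop:bound}) yields $|\varepsilon\,\x_1\tran(-K_d+C)\x_2|\le\varepsilon(k_{d2}+k_C\Verts{\dq})\Verts{\x_1}\Verts{\x_2}$. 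Writing $a=\Verts{\x_1}$ and $b=\Verts{\x_2}$, the form $\bm z\tran A\bm z$ is then bounded above by the quadratic form of
\begin{align}
M=\begin{bmatrix} -(k_{d1}-\varepsilon h_2) & \tfrac{\varepsilon}{2}(k_{d2}+k_C\Verts{\dq})\\ \tfrac{\varepsilon}{2}(k_{d2}+k_C\Verts{\dq}) & -\varepsilon k_{p1}\end{bmatrix}, \notag
\end{align}
so it suffices to make $M$ negative definite. By Sylvester's criterion (equivalently, a Schur-complement reduction on the $-\varepsilon K_p$ block of $A$) this is the pair of scalar conditions $k_{d1}-\varepsilon h_2>0$ and
\begin{align}
(k_{d1}-\varepsilon h_2)\,k_{p1}>\tfrac{\varepsilon}{4}(k_{d2}+k_C\Verts{\dq})^2. \notag
\end{align}

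The first condition is met by any $\varepsilon<k_{d1}/h_2$, and it simultaneously shows that a strictly positive uniform floor $k_{d1}>0$ on the derivative gain is required; I would either assume this explicitly (a minimal baseline derivative gain that the design enforces) or carry $\lambda_{\min}(K_d)$ through pointwise. The second condition is the main obstacle: its right-hand side grows quadratically in $\Verts{\dq}$ through the Coriolis bound $\Verts{C}\le k_C\Verts{\dq}$, whereas every quantity on the left is bounded, and the linearity $C(\q,\dq)\bm p=C(\q,\bm p)\dq$ of \cref{prop:bound} shows the coupling genuinely scales like $\Verts{\dq}$ and cannot be absorbed by the bounded diagonal blocks. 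To deliver the single $\varepsilon$ asserted by the lemma for every $\dq\in\R^n$ I would fix it from the largest velocity $\bar v$ attained where $A$ is actually evaluated, i.e.
\begin{align}
\varepsilon<\frac{k_{d1}}{\,h_2+\dfrac{(k_{d2}+k_C\,\bar v)^2}{4k_{p1}}\,}. \notag
\end{align}
The delicate step -- and the one I expect to require the most care -- is therefore justifying that $\bar v$ is finite for the signals feeding $A$, which is exactly where this lemma couples back to the boundedness of the closed loop established downstream.
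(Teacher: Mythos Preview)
Your approach is essentially the paper's: both reduce negative definiteness of $A$ to a Schur-complement condition. The paper applies Schur's lemma directly at the block level, arguing that $A_{11}=-K_d+\varepsilon H$ is negative definite for small $\varepsilon$ and that the Schur complement $S=-\varepsilon K_p+\tfrac{\varepsilon^2}{4}(K_d-C\tran)(K_d-\varepsilon H)^{-1}(K_d\tran-C)$ is negative definite because its first summand is linear in $\varepsilon$ and negative while the second is quadratic in $\varepsilon$ and positive. You instead insert the scalar bounds from \cref{prop:bound} and \cref{as:KdKp} \emph{before} passing to determinants, which collapses the question to your $2\times2$ matrix $M$; the resulting scalar inequalities are the same.

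Where you go beyond the paper is the uniformity in $\dq$. The paper's proof ends with the pointwise statement ``for every $\q,\dq\in\R^n$, an $\varepsilon$ can be found,'' and does not produce a single $\varepsilon$ valid for all $\dq$; so the obstruction you flag---the Coriolis contribution $k_C\Verts{\dq}$ in the off-diagonal block growing without bound---is real and is not addressed in the lemma's proof. Your proposed resolution, fixing $\varepsilon$ from a velocity bound $\bar v$ supplied by the closed loop, is exactly how the lemma is used downstream: in the proof of \cref{theo:main} the analysis is confined to a sublevel set $\{V\le V_0\}$ inside a compact set $D$, which bounds $\Verts{\dq}$ and makes the required $\varepsilon$ uniform there. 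Your remark that a strictly positive floor $k_{d1}>0$ is needed (rather than the $k_{d1}\ge 0$ literally allowed in \cref{as:KdKp}) is likewise correct and is implicitly used in both arguments.
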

\begin{proof}
According to the Schur's Lemma,~$A$ is negative definite if
\begin{align}
A_{11}&=-K_d(\Var_d)+\varepsilon H\\
S&=-\varepsilon K_p(\Var_p)+\frac{\varepsilon^2}{4}(K_d(\Var_d)-C\tran)\notag\\
&\phantom{=}(K_d(\Var_d)-\varepsilon H)^{-1}(K_d\tran(\Var_d)- C)
\end{align}
are negative definite where~$A_{11}\in\R^{n\times n}$ is the upper left block of~$A$ and~$S\in\R^{n\times n}$ is the Schur complement. Since~$K_d,H$, and~$K_p$ are positive definite and bounded,~$\varepsilon$ can be chosen sufficiently small to obtain the negative definiteness of~$A_{11}$. The second summand of the Schur complement~$S$ is quadratic in~$\varepsilon$ and positive definite while the first summand is linear in~$\varepsilon$ and negative. Thus, for every~$\q,\dq\in\R^n$, an~$\varepsilon$ can be found which guarantees the negative definiteness of the Schur complement. Therefore, the matrix~$A$ is negative definite. 
\end{proof}
The next theorem introduces the control law with guaranteed boundedness of the tracking error which is the main contribution of the paper.
\begin{mytheo}
\label{theo:main}
Consider the Lagrangian system~\cref{for:dyn_model_man} which satisfies the~\cref{prop:structural,prop:bound} and~\cref{as:rkhs,as:KdKp,as:tra}. A Gaussian Process is trained with~$m$ data pairs of the set~$\mathcal D=\{\breve{\q}^{\{i\}},\tilde{\bm\tau}^{\{i\}}\}_{i=1}^{m}$ with
\begin{align}
\tilde{\bm \tau} =\bm \tau-H(\q)\ddq-C(\q,\dq)\dq-\g(\q).
\end{align}
Let~$\qe=\q-\qd,\dqe=\dq-\dqd$ be the tracking error. The control law
\begin{align}
\bm{\tau}&=H(\q)\ddqd+C(\q,\dq)\dqd+\g(\q)+\Mean(\tilde{\bm \tau}\vert \breve{\q},\mathcal D)\notag\\
&-K_d(\Var_d) \dqe-K_p(\Var_p) \qe\label{for:control_law}
\end{align}
guarantees that the tracking error is uniformly ultimately bounded and exponentially converges to a ball with a probability of at least~$(1-\delta)^n,\delta\in(0,1)$.
\end{mytheo}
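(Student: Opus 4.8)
The plan is to certify stability with a passivity-based Lyapunov function whose quadratic part is exactly the matrix $A$ of \Cref{lemma:A}, while treating the Gaussian Process residual as a bounded disturbance quantified by \Cref{lemma:boundederror}. First I would substitute the control law \cref{for:control_law} into the plant \cref{for:dyn_model_man}. The model-based feed-forward terms $H\ddqd$, $C\dqd$ and $\g$ cancel, and regrouping the rest in terms of the tracking error gives the closed-loop error dynamics
\begin{align}
H(\q)\ddqe + C(\q,\dq)\dqe + K_d(\Var_d)\dqe + K_p(\Var_p)\qe = \bm d,
\end{align}
with residual $\bm d = \Mean(\tilde{\bm\tau}\vert\breve{\q},\mathcal D) - \bm\kappa(\breve{\q})$. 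By \Cref{lemma:boundederror}, on the compact set $D$ this residual obeys $\Verts{\bm d}\leq\bar d$ with $\bar d=\Verts{\bm\beta\tran\Var^{1/2}(\tilde{\bm\tau}\vert\breve{\q},\mathcal D)}$ with probability at least $(1-\delta)^n$; the remainder of the argument is conducted on this event.

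Next I would propose the Lyapunov candidate
\begin{align}
V = \tfrac{1}{2}\dqe\tran H(\q)\dqe + \tfrac{1}{2}\qe\tran K_p(\Var_p)\qe + \varepsilon\,\qe\tran H(\q)\dqe,
\end{align}
using the same $\varepsilon$ that renders $A$ negative definite in \Cref{lemma:A}. Setting $\bm z=[\dqe\tran,\qe\tran]\tran$ and invoking the inertia bounds $h_1\Verts{\x}^2\leq\x\tran H\x\leq h_2\Verts{\x}^2$ from \Cref{prop:bound} together with the quadratic bounds on $K_p$ from \Cref{as:KdKp}, I would show that for sufficiently small $\varepsilon$ the candidate is sandwiched as $c_1\Verts{\bm z}^2\leq V\leq c_2\Verts{\bm z}^2$ with $c_1,c_2>0$, so $V$ is positive definite and radially unbounded.

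Differentiating $V$ along the error dynamics and substituting $H\ddqe$ from the closed loop, the inertia-derivative contribution is eliminated by the skew-symmetry of $\dot H-2C$ in \Cref{prop:structural}, and the proportional cross terms $\pm\,\dqe\tran K_p\qe$ cancel. The decisive bookkeeping step is that the remaining quadratic terms in $\bm z$ collapse precisely into $\bm z\tran A\bm z$; by \Cref{lemma:A} this is negative definite, hence $\bm z\tran A\bm z\leq-\lambda_A\Verts{\bm z}^2$ for some $\lambda_A>0$, while the residual enters only through the cross term $(\dqe+\varepsilon\qe)\tran\bm d\leq c_3\bar d\,\Verts{\bm z}$. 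Collecting these gives $\dot V\leq-\lambda_A\Verts{\bm z}^2+c_3\bar d\,\Verts{\bm z}$, which is negative for $\Verts{\bm z}>c_3\bar d/\lambda_A$; splitting the quadratic term and using $V\leq c_2\Verts{\bm z}^2$ yields $\dot V\leq-\tfrac{\lambda_A}{2c_2}V$ outside the ball $\Verts{\bm z}\leq 2c_3\bar d/\lambda_A$. This certifies exponential convergence of $\bm z$ into that ball and thus uniform ultimate boundedness of the tracking error, holding with probability at least $(1-\delta)^n$.

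The step I expect to be the main obstacle is the state dependence of the gains. Since $K_p$ is a function of $\Var_p(\q)$, differentiating $\tfrac{1}{2}\qe\tran K_p\qe$ produces the extra term $\tfrac{1}{2}\qe\tran\dot K_p\qe$ absent in the constant-gain setting; this must be bounded by something of the form $\tfrac{1}{2}L\Verts{\bm z}^2$ and dominated by the margin $\lambda_A$, which in turn requires the gain map to have bounded variation along trajectories, secured through the bounded reference of \Cref{as:tra} and boundedness of the state on $D$. A related care point is the coupling between the deterministic Lyapunov estimate and the probabilistic bound of \Cref{lemma:boundederror}: one must ensure the trajectory never leaves the compact set $D$ where $\bar d$ is valid, which is itself guaranteed by restricting attention to a sublevel set of $V$ and the ultimate-boundedness conclusion on the good probability event.
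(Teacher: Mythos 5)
Your high-level architecture (cross-term Lyapunov candidate $\varepsilon\,\qe\tran H\dqe$, \Cref{lemma:boundederror} for the residual, the matrix $A$ of \Cref{lemma:A} for the quadratic part, containment in the compact set $D$) matches the paper's, but two of your steps are genuine gaps, not technicalities. First, the Lyapunov function: you take $\tfrac{1}{2}\qe\tran K_p(\Var_p)\qe$, whereas the paper takes the potential term $\int_0^\qe\bm z\tran K_p(\Var_p)d\bm z$, and this difference is precisely what breaks your argument. Since $K_p$ depends on $\q$ through $\Var_p$, your candidate produces the term $\tfrac{1}{2}\qe\tran\dot K_p\qe$ that you flag as the ``main obstacle,'' and under the paper's hypotheses it cannot be repaired: \Cref{as:KdKp} assumes only continuity and boundedness of the gain maps (no differentiability, no Lipschitz constant), so $\dot K_p$ need not even exist; and even granting smoothness, $\dot K_p=(\partial K_p/\partial\q)\,\dq$ scales with $\Verts{\dq}$, so $\tfrac{1}{2}\qe\tran\dot K_p\qe$ is cubic in $\Verts{\bm z}$ and cannot be dominated by any constant margin $\lambda_A\Verts{\bm z}^2$. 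The paper's integral (potential-energy) form, enabled by the diagonal structure of $K_p$ required in \Cref{as:KdKp} and borrowed from the variable-gain PD literature, has time derivative $\qe\tran K_p(\Var_p)\dqe$, which cancels exactly against the proportional term of the error dynamics, so no $\dot K_p$ term ever appears.

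Second, your claim that $\bm z\tran A\bm z\le-\lambda_A\Verts{\bm z}^2$ for a uniform $\lambda_A>0$ is false on $\R^{2n}$: $A$ contains $C(\q,\dq)$, and by \Cref{prop:bound} $\Verts{C(\q,\dq)}\le k_C\Verts{\dq}$ is unbounded in $\dq$, so the pointwise negative definiteness of \Cref{lemma:A} yields no uniform margin over the whole state space. The paper does not use such a margin; it splits $C$ via $\dq=\dqe+\dqd$ and \Cref{as:tra}, keeps the resulting cubic term $\varepsilon k_C\Verts{\dqe}^2\Verts{\qe}$ explicit, and dominates it only on a sublevel set $V\le V_0$ --- this is exactly why the decay rate $\xi$ in \cref{for:xi} contains the factor $\sqrt{2V_0/(k_{p1}-\varepsilon h_2)}$ and why $\varepsilon$ acquires an additional $V_0$-dependent restriction. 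Your conclusion would have to be localized in the same way (trajectory confined to an invariant sublevel set, ultimate-bound ball contained in $D$), which your final paragraph gestures at but does not carry out; as written, the inequality $\dot V\le-\tfrac{\lambda_A}{2c_2}V$ outside a fixed ball does not follow from what precedes it.
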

\begin{proof}
For the stability analysis, we use the following Lyapunov candidate 
\begin{align}
V(\dqe,\qe)=\frac{1}{2}\dqe\tran H(\q)\dqe+\int_0^\qe \bm z\tran K_p(\Var_p)d\bm z+\varepsilon \qe\tran H(\q)\dqe\label{for:lyap}
\end{align}
with a parameter~$\varepsilon>0$. To ensure that the Lyapunov candidate is positive definite, we analyze the domain of the integral
\begin{align}
\int_0^\qe \bm z\tran K_p(\Var_p)d\bm z&\geq \frac{1}{2}\sum_{i=1}^n \underline{k}_{p,ii}e_i^2\geq \frac{1}{2} k_{p1} \Vert \qe \Vert^2,
\end{align}
where each component of the sum has a lower bound and thus, the whole integral is lower bounded. An upper quadratic bound can be found in a similar way as presented in \cite{ravichandran2004stability}. Since the integral is lower bounded and~$H(\q)$ is always positive definite, the parameter~$\varepsilon$ can be chosen sufficiently small to achieve a positive definite and radially unbounded Lyapunov function. The valid interval for~$\varepsilon$ can be determined by the lower bound of the Lyapunov function
\begin{align}
V(\dqe,\qe)\geq \frac{1}{2}h_1\Verts{\dqe}^2+\frac{1}{2}k_{p1}\Verts{\qe}^2-\frac{1}{2}\varepsilon h_2\left( \Verts{\dqe}^2+\Verts{\qe}^2\right)
\end{align}
which is positive for $0<\varepsilon < \min\left\lbrace k_{p1}/h_2,h_1/h_2\right\rbrace$. In the next step, we investigate the time derivative of the Lyapunov function to establish stability of the closed loop. With~\cref{prop:structural} and~\cref{for:control_law}, it can be written as
\begin{align}
\dot V=&\begin{bmatrix} \dqe\\\qe \end{bmatrix}\tran\underbrace{\begin{bmatrix}
-K_d(\Var_d)+\varepsilon H & \frac{\varepsilon}{2}(- K_d\tran(\Var_d)+ C)\\
\frac{\varepsilon}{2}(- K_d(\Var_d)+ C\tran) & - \varepsilon K_p(\Var_p) 
\end{bmatrix}}_{A}\begin{bmatrix} \dqe\\\qe \end{bmatrix}\notag\\
+&\begin{bmatrix} \dqe\tran & \varepsilon\qe\tran \end{bmatrix}\underbrace{\begin{bmatrix}
\Mean(\tilde{\bm \tau})-\bm\kappa(\breve{\q})\\
\Mean(\tilde{\bm \tau})-\bm\kappa(\breve{\q})\label{for:dotV}\\
\end{bmatrix}}_{\bm b}.
\end{align}
For the analysis, we define the two parts of the equation as~$A\in\R^{2n\times 2n}$ and~$\bm b\in\R^{2n}$. The following statements can be made for the matrix~$A$: The submatrix~$A_{11}$ is bounded with
\begin{align}
\dqe\tran A_{11}\dqe&=\dqe\tran\left(-K_d(\Var_d)\varepsilon H\right)\dqe\leq(-k_{d1}+\varepsilon h_2)\Verts{\dqe}^2.
\end{align} 
With~\cref{as:KdKp,as:tra}, and~\cref{prop:bound}, the submatrix~$A_{21}$ is bounded by
\begin{align}
\qe\tran A_{21} \dqe&\leq\varepsilon \left( k_C\Vert\dqe\Vert+k_C \bar{\dot q}_d+ k_{d2}\right) \Vert \dqe \Vert \Vert \qe \Vert.
\end{align}
Then, using~\cref{lemma:boundederror,as:KdKp}, the overall upper bound for the time derivative of the Lyapunov function is given by
\begin{align}
\dot V(\dqe,\qe)&\leq (\varepsilon h_2-k_{d1}) \Vert \dqe \Vert^2-\varepsilon k_{p1}\Vert \qe\Vert^2\notag\\
&+\varepsilon \left( k_C\Vert\dqe\Vert+k_C \bar{\dot q}_d+ k_{d2}\right) \Verts{\dqe}\Verts{\qe}\notag\\
&+(\Verts{\dqe}+\varepsilon\Verts{\qe})\Verts{\bm \beta\tran \Var(\tilde{\bm \tau}\vert \breve{\q},\mathcal D)}. \label{for:Vbound}
\end{align}
Considering the inequality
\begin{align}
\Verts{\dqe}\Verts{\qe}\leq \frac{1}{2}\left( \rho \Verts{\dqe}^2+\frac{\qe^2}{\rho}\right)
\end{align}
that holds for all~$\dqe,\qe\in\R^n$ and~$\rho\in\R_{\geq 0}$,~\cref{for:Vbound} can be rewritten as
\begin{align}
\dot V(\dqe,\qe)&\leq \left(\varepsilon h_2-k_{d1}+\frac{\varepsilon \rho}{2}(k_C \bar{\dot q}_d+k_{d2})\right) \Verts{\dqe}^2\notag\\
&-\varepsilon k_{p1}\frac{\varepsilon_2}{1+\varepsilon_2}\Verts{\qe}^2+\varepsilon k_C\Verts{\dqe}^2\Verts{\qe}\notag\\
&+(\Verts{\dqe}+\varepsilon\Verts{\qe})\Verts{\bm \beta\tran \Var(\tilde{\bm \tau}\vert \breve{\q},\mathcal D)}\label{for:Vbound3}\\
\text{with }&\rho=(1+\varepsilon_2)\frac{k_C \bar{\dot q}_d+ k_{d2}}{2 k_{p1}},\,\varepsilon_2\in\R_{> 0}.\notag
\end{align}
The choice of~$\rho$ guarantees that the factors of the quadratic parts are still negative. The linear part of~\cref{for:Vbound3} can be bounded by a quadratic function with $v_1\Verts{\x}\leq v_1^2/v_2+v_2/4\Verts{\x}^2$ that holds for all~$\x\in\R^n$ and~$v_1,v_2\in\R_{\geq 0}$
\begin{align}
&(\Verts{\dqe}+\varepsilon\Verts{\qe})\Verts{\bm \beta\tran \Var(\tilde{\bm \tau}\vert \breve{\q},\mathcal D)}\notag\\
\leq &\frac{{\bar \Delta}^2 }{v_1}+\frac{v_1}{4}\Verts{\dqe}+\frac{{\varepsilon^2\bar \Delta}^2 }{\varepsilon v_2}+\frac{\varepsilon v_2}{4}\Verts{\qe}.\label{for:mod_bound}
\end{align}
with~$\Verts{\bm \beta\tran \Var(\tilde{\bm \tau}\vert \breve{\q},\mathcal D)}\leq \bar \Delta\in\R_{>0}$ of~\cref{lemma:boundederror} and
\begin{align}
v_1&:=-\varepsilon h_2+k_{d1}-\frac{\varepsilon \rho}{2}(k_C \bar{\dot q}_d+k_{d2}),\,v_2:= k_{p1}\frac{\varepsilon_2}{1+\varepsilon_2}.
\end{align}
Since the covariance function is bounded on the closed interval $D$, the variance~$\Var(\tilde{\bm \tau}\vert \breve{\q},\mathcal D)$ is bounded, for more details see \cite{beckers:cdc2016}. Thus, there exists an upper bound~$\bar \Delta$ for the model error relating to~$\dqe,\qe$.
To ensure that the variables~$v_1,v_2$ are positive, the restriction for $\varepsilon$ must be extended to
\begin{align}
0<\varepsilon <\min\left\lbrace \frac{k_{p1}}{h_2},\frac{h_1}{h_2},\frac{2 k_{d1}}{2 h_2+\rho(k_C \bar{\dot q}_d+k_{d2})}\right\rbrace.\label{for:epsilon2}
\end{align}
With~\cref{for:mod_bound}, equation~\cref{for:Vbound3} can be rewritten as
\begin{align}
\dot V(\dqe,\qe)&\leq -\frac{3}{4}v_1 \Verts{\dqe}^2-\frac{3}{4}\varepsilon v_2 \Verts{\qe}+\varepsilon k_C\Verts{\dqe}^2\Verts{\qe}\notag\\
&+ \frac{{\bar \Delta}^2 }{v_1}+ \varepsilon\frac{{\bar \Delta}^2 }{v_2}.
\end{align}
According to \cite{ravichandran2004stability} and~\cref{lemma:boundederror}, there exists a~$\xi\in\R_{\geq 0}$ and a~$\varrho\in\R_{\geq 0}$ for~\cref{for:Vbound} such that
\begin{align}
\text{P}\left\lbrace\dot V(\dqe,\qe)\leq -\xi V(\dqe,\qe)+\varrho\right\rbrace\geq (1-\delta)^n
\end{align}
holds with
	\begin{align}
		\xi&=\frac{2}{3}\frac{\min\left\lbrace\varepsilon v_2,v_1-\frac{4}{3}\varepsilon k_c \sqrt{\frac{ 2V_0}{k_{p1}-\varepsilon h_2}}\right\rbrace}{\max\left\lbrace\varepsilon h_2+k_{p2},(1+\varepsilon)h_2\right\rbrace}\label{for:xi}\\
		\varrho&=\frac{{\bar \Delta}^2 }{v_1}+ \varepsilon\frac{{\bar \Delta}^2 }{v_2}
	\end{align}
	and the extension of~\cref{for:epsilon2}
	\begin{align}
		0<\varepsilon < \min&\left\lbrace \frac{k_{p1}}{h_2},\frac{h_1}{h_2},\frac{2 k_{d1}}{2 h_2+\frac{2k_{p1}\rho^2}{1+\varepsilon_2}+\frac{8}{3}k_c\sqrt{\frac{ 2V_0}{k_{p1}-\varepsilon h_2}}}\right\rbrace .
	\end{align}
	Finally,~\cref{lemma:boundederror} requires that~$[\ddq,\dq,\q]$ is always an element of the set~$D$. Therefore, it must be chosen so that
\begin{align}
\{\forall\ddq,\dq,\q\in\R^n:V(\dqe,\qe)\leq V_0\}&\in D\\
\text{and }\{\forall\ddq,\dq,\q\in\R^n:\Verts{\dqe^\top,\qe^\top}\leq \sqrt{2\varrho /(\xi\underline{c})}\}&\in D
\end{align}
with $\underline{c}=\min\left\lbrace k_{p1}-\varepsilon h_2, h_1-\varepsilon h_2 \right\rbrace$ which guarantees that the trajectory stays inside the set~$D$,~\cite{ravichandran2004stability}.
Thus, the tracking error is uniformly ultimately bounded and converge to a ball with a probability of at least~$(1-\delta)^n$.
\end{proof}
\begin{myremark}
The first summand of~\cref{for:dotV} contains the influence of the controller on the system while the second summand captures the model error. If a perfect model was available, such that~$\Mean(\tilde{\bm \tau})=\bm\kappa(\breve{\q})$, equation~\cref{for:dotV} with~\cref{lemma:A} would show that the closed loop system is asymptotically stable.
\end{myremark}
\begin{myremark}
A similar idea of GPR-based computed torque control is presented in~\cite{alberto2014computed}, however, without stability analysis.
\end{myremark}
%Based on the presented result, several application scenarios are possible e.g.:
%\begin{itemize}
%\item Assuming a trained Gaussian Process model, the bounds for the gains~$K_p$ and~$K_d$ can be parametrized in a way to archive convergence to a ball with a specific radius and probability.
%\item An adaptation law for the gains~$K_p$ and~$K_d$ can be derived to keep the probability constant to converge to a ball with a given radius.
%\item For a given radius of the ball, the maximum variance of the Gaussian Process can be computed. This allows to draw conclusions regarding the number of required training points.
%\item If low feedback gains are predetermined, i.e. for safety reasons, the radius of the ball around the desired trajectory can be computed based on the model error. 
%\end{itemize}
\section{Simulation}\label{sec:sim}
For the simulation, we apply Lagrange's equations to the common model of a 2-link planar manipulator~\cite{murray1994mathematical}. We assume a point mass for the links of~$\SI{1}{\kilogram}$ which are located in the center of each link. The length of the links is set to~$\SI{1}{\meter}$. The joints are without mass and not influenced by any friction. Gravity is assumed to be~$\SI{10}{\meter\second^{-2}}$. 
%Thus, the system matrices of~\cref{for:dyn_model_man} are given by
%\begin{align}
%H&=\begin{bmatrix}
%1.5+\cos(q_2) & 0.5+0.5\cos(q_2) \\ 0.5+0.5\cos(q_2) & 0.5
%\end{bmatrix}\notag\\
%C&=\begin{bmatrix}
%-0.5\sin(q_2)\dot{q}_2 & -0.5\sin(q_2)(\dot{q}_1+\dot{q}_2) \\ 0.5\sin(q_2)\dot{q}_1 & 0
%\end{bmatrix}\notag\\
%\bm g&=\begin{bmatrix}
%10\sin(q_1)+5\sin(q_1+q_2)\label{for:exsystem}\\
%5\sin(q_1+q_2)
%\end{bmatrix},
%\end{align}
Here, the generalized coordinates~$q_1$ and~$q_2$ are the joint angles. The initial values are set to $\q_0=[0,0]\tran$. The unknown dynamics~$\bm\kappa(\cdot)$ is simulated by a sample path of a Gaussian Process with a squared exponential covariance function which is acting here as ground truth. This approach guarantees that~$\bm\kappa(\cdot)$ has a bounded RKHS norm regarding to the squared exponential function. This Gaussian Process is trained by 50 data values of the arbitrary chosen nonlinear function
\begin{align}
f(\q,\dq)=
\begin{bmatrix}
-\dot{q}_1+2\sin(q_2)+\vert q_1 \vert\\
-\dot{q}_2+2\sin(q_2)
\end{bmatrix}.\label{for:exdyn}
\end{align}
Now, the proposed control law of~\cref{theo:main} is used. We assume that the estimated matrices of the Lagrangian system are equal to the true matrices. A Gaussian Process with a squared exponential covariance function learns the difference between the estimated model and the true system, thus the unknown dynamics~$\bm\kappa(\cdot)$. For this purpose, we generate 225 pairs of~$\{\tilde\tau\}$ and states~$\{\ddq,\dq,\q\}$ as training data on the domain~$\ddq,\dq\in[-1,1]^2,\,\q\in[0,1]$ to generate a set~$\mathcal D$ of training points. The hyperparameters are optimized by means of the likelihood function. The desired trajectory is a sinusoidal function with $\q_0=[0,1]\tran$.\\
As comparison, the proposed control law with static gains~$K_{p,\text{static}}=\diag(10,10)$ and~$K_{d,\text{static}}=\diag(10,10)$ is used.~\Cref{fig:var_gains} shows the resulting trajectory for the first joint. The system trajectory with static gains (red dotted) is close to the desired trajectory (blue dashed) while it is in the neighborhood of the training data. Outside this area, the tracking error increases. Now, the same  control law of~\cref{theo:main} with variable gains is used. In this example, the gains are adapted according to~\cref{as:KdKp} with
\begin{align}
K_p(\Var_p)&=\diag(10 + 30\var_{p,1}(\tilde{\bm \tau}),10 + 30\var_{p,2}(\tilde{\bm \tau}))\\
K_d(\Var_d)&=\diag(10 + 30\var_{d,1}(\tilde{\bm \tau}),10 + 30\var_{d,2}(\tilde{\bm \tau})).
\end{align}
In~\cref{fig:var_gains}, the color of the trajectory indicates the norm of the current feedback gains. In the area close to the training data, the feedback gains remain low (blue color) while outside the training area the gains increase (red color). The result is that the tracking error is kept low and bounded even for areas where no training data is available. 
%The time-dependent variation of the gains can be seen in~\cref{fig:var_gains_values}. 
%\begin{figure}[tbh]
%\begin{center}
%\vspace{-0.4cm}
% %\tikzsetnextfilename{training_bsb}	
% \input{figure/var_gains_values.tex}
%	\vspace{-0.7cm}\caption{Norm of the feedback gains for i) the static case (red) ii) the adaptive case (colored).\label{fig:var_gains_values}}
%\end{center}
%\end{figure}
%The gains are normalized to the norm of the static feedback gains (red line). Outside the training area, the adapted gains are increased up to 5.5 times the static feedback gains.
%%%%%%%%%%%%%%%%%%%%%%%%%%%%%%%%%%%%%%%%%%%%%%%%%%%%%%%%%%%%%%%%%%%%%%%%%%%%%%%%
%\addtolength{\textheight}{-15.5cm}   % This command serves to balance the column lengths
                                  % on the last page of the document manually. It shortens
                                  % the textheight of the last page by a suitable amount.
                                  % This command does not take effect until the next page
                                  % so it should come on the page before the last. Make
                                  % sure that you do not shorten the textheight too much.
%%%%%%%%%%%%%%%%%%%%%%%%%%%%%%%%%%%%%%%%%%%%%%%%%%%%%%%%%%%%%%%%%%%%%%%%%%%%%%%%
\section*{Conclusion}
In this paper, we propose a GPR-based control law for Lagrangian systems which guarantees a bounded tracking error. The feedback gains of the control law are adapted by the model fidelity to keep the feedback gains as low as possible. The main contribution is that the tracking error of the closed loop system with the data-driven GPR model is proven to be uniformly ultimately bounded and exponentially convergent to a ball with a given probability. 

\section*{ACKNOWLEDGMENTS}
The research leading to these results has received funding from the European Research Council under the European Union Seventh Framework Program (FP7/2007-2013) / ERC Starting Grant ``Control based on Human Models (con-humo)'' agreement n\textsuperscript{o}337654.
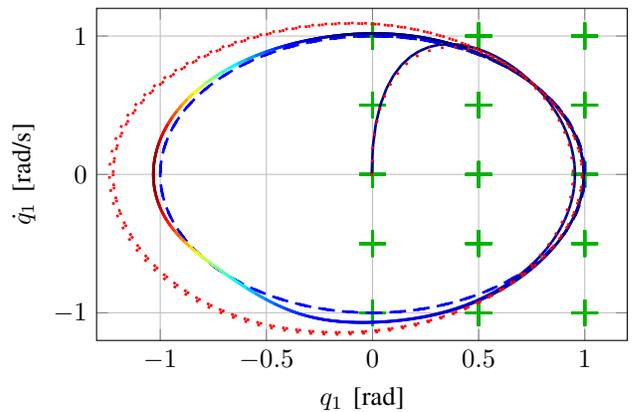
\begin{figure}[t]
\begin{center}
 %\tikzsetnextfilename{training_bsb}
 \vspace{0.2cm}	
 \begin{tikzpicture}
\pgfplotsset{
  set layers,% using layers
  mark layer=axis tick labels% defines the layer of the marks
}
\begin{axis}[
  xlabel={$q_1$ [rad]},
  ylabel={$\dot{q}_1$ [rad/s]},
  line width=1pt,
  grid = major,
  colormap/jet,
    height=6cm,
  width=\columnwidth,
  xmin=-1.3, xmax=1.2, ymin=-1.2, ymax=1.2]
\addplot[mark=+,color=green!70!black, only marks,mark size=5,line width=1pt] table [x index=0,y index=1]{data/training_points.dat};
\addplot+[color=blue,style=loosely dashed,no marks,line width=1pt] table [x index=0,y index=1]{data/var_gains.dat};
\addplot+[mesh,point meta=\thisrowno{4}, no marks,line width=1pt, shader=interp] table [x index=2,y index=3]{data/var_gains.dat};
\addplot+[color=red,no marks, style=loosely dotted, line width=1pt] table [x index=2,y index=3]{data/const_gains.dat};
\legend{}
\end{axis}
\end{tikzpicture} 
	\vspace{-0.2cm}\caption{Comparsion between the proposed control law with static feedback gains (red dotted) and adapted feedback gains (solid line) regarding to the desired trajectory (blue dashed). The color of the adapted feedback trajectory indicates the norm of the current feedback gains. The green crosses mark the training data points.\label{fig:var_gains}\vspace{-0.5cm}}
\end{center}
\end{figure}

%%%%%%%%%%%%%%%%%%%%%%%%%%%%%%%%%%%%%%%%%%%%%%%%%%%%%%%%%%%%%%%%%%%%%%%%%%%%%%%%
\bibliography{mybib}
\bibliographystyle{ieeetr}

\end{document}